\newtheorem{theorem}{Theorem}[section]
\theoremstyle{definition}
\newtheorem{definition}[theorem]{Definition}
\newtheorem{cor}[theorem]{Corollary}
\newtheorem{prop}[theorem]{Proposition}
\theoremstyle{remark}
\newtheorem{remark}[theorem]{Remark}
\newcommand{\bes}{{\begin{split}}}
\newcommand{\ees}{{\end{split}}}
\newcommand{\bees}{{\begin{equation}\begin{split}}}
\newcommand{\es}{{\end{split}\end{equation}}}
\newcommand{\erre}{{\mathbb R}}
\newcommand{\bea}{\begin{eqnarray}}
\newcommand{\eea}{\end{eqnarray}}
\newcommand{\be}{\begin{equation}}
\newcommand{\ee}{\end{equation}}
\newcommand{\n}{\noindent}
\newcommand{\f}{\frac}
\newcommand{\ve}{\varepsilon}
\newcommand{\om}{\omega}
\newcommand{\mc}{\mathcal}
\numberwithin{equation}{section}
\begin{document}
\large

\title{On the structure of critical energy levels \\ for the cubic focusing NLS on
  star graphs}


\author{Riccardo Adami}

\address{Adami and Noja: Dipartimento di Matematica e Applicazioni, 
Universit\`a di Milano Bicocca, via R. Cozzi 53, 20125 Milano, Italy
and Istituto di Matematica Applicata e Tecnologie Informatiche ``Enrico Magenes'', CNR, via Ferrata, 1
27100 Pavia, Italy \\
e-mail: riccardo.adami@unimib.it, diego.noja@unimib.it}

\address{
Cacciapuoti: Hausdorff Center for Mathematics, Institut f\"ur
Angewandte Mathematik, 60, Endenicher Allee, 53115 Bonn, Germany \\
e-mail: cacciapuoti@him.uni-bonn.de}

\address{
Finco:  Facolt\`a di Ingegneria, Universit\`a Telematica
Internazionale Uninettuno, 
Corso Vittorio Emanuele II 39, 00186 Roma,
Italy \\
e.mail: d.finco@uninettunouniversity.net}




\author{Claudio Cacciapuoti}

\author{Domenico Finco}

\author{Diego Noja}

\date{February 11, 2012}


\keywords{}

\begin{abstract} We provide information on a non trivial structure of
  phase space of the cubic NLS on a three-edge star graph.  
We prove that, contrarily to the case of the standard NLS on the line,
the energy associated to the cubic focusing Schr\"odinger equation on
the three-edge star graph with a free (Kirchhoff) vertex does not
attain a minimum value on any sphere of constant $L^2$-norm. We
moreover show that the only stationary state with prescribed
$L^2$-norm is indeed a saddle point.
\end{abstract}

\maketitle

\section{Introduction}



A major issue in nonlinear dynamics consists in the search for
stationary solutions and in the study of their stability
properties. In the case of hamiltonian systems, a first picture of the
phase space can be drawn by identifying critical points of the energy and their
nature. In particular it is important to know if a {\em ground state}
exists, where a ground state is defined as the minimizer of the energy
functional, possibly restricted to suitable submanifolds. In the
context of the nonlinear Schr\"odinger (NLS) equation, one typically
restricts the energy functional to a manifold on which a second
conserved quantity (sometimes called mass, or charge) is
constant. From a physical point of view, such a restriction is
meaningful, as the extra conserved quantity often represents a
physical characteristic of the system (e.g., the mass, or the number
of particles). In the one dimensional case, the ground states of the
NLS with power nonlinearity on the line are well known and completely
described in the classical paper \cite{BL1}, where more general
nonlinearities are also treated. It turns out that on the line, the NLS
energy constrained to the manifold of the states of constant mass,
attains its minimum value in correspondence of a unique (up to
translation) positive, symmetric and decreasing (for $x>0$)
function. No other critical points of the energy exist. \par\noindent
In this paper we are interested in the case of the focusing nonlinear
cubic Schr\"odinger equation on a three-edge star graph, sometimes called
in the physical literature a Y junction. To put the issue in a
physical context, we begin by recalling that the {\it linear}
Schr\"odinger equation on a graph is a well developed subject, as an
effective description of dynamics of many mesoscopic systems such as,
for example, quantum nanowires (see \cite{Ex, kostrykin, Kuchment} and
references therein). It is of interest to extend the analysis to the
nonlinear wave propagation on networks. In particular, it is well
known that the NLS appears as an effective equation in several
different areas: the description of Bose condensates, the propagation
of electromagnetic pulses in nonlinear (Kerr) media and Langmuir waves
in plasma physics. In many situations it seems of interest to treat
the propagation of NLS solutions associated to such phenomena in one
dimensional ramified structures, the prototype of which is a 
three-edge star graph, or Y-junction. The subject is at its beginnings and
for some preliminary experimental, numerical and analytical works see
\cite{Smi, Kevr, Miro, Sob, TOD, Wan}.  A first rigorous analysis of
nonlinear stationary (or bound) states for NLS on a star graphs is
outlined in \cite{ACFN2}, where solitary waves for a star graph with
delta vertex are constructed, including the case of a free or
Kirchhoff vertex. The Kirchhoff vertex is the closest analogue to the
free particle on the line, to which it reduces when the line is
considered as a graph with two edges. In the linear case the Kirchhoff
laplacian on a graph, in analogy with the laplacian on the line, has
only absolutely continuous spectrum, and so only scattering states are
possible for the Schr\"odinger dynamics. However, in the presence of a nonlinearity a
three-edge star graph with Kirchhoff conditions at the vertex, admits
a unique stationary state, which is quite simply described: it is the
state on the graph which coincides on every edge with half a
soliton. One could suspect that this stationary state is the ground
state. On the contrary, we show that, perhaps unexpectedly, 
this is not the case.

This fact highlights that the NLS on graphs, even on
the simplest one, exhibits remarkable differences
w.r.t. the same evolution equation on the line.

In this letter we
consider the case of the cubic focusing NLS and analyze the
energy constrained to a fixed sphere in $L^2$, i.e. the set of states
of prescribed mass. We show that the constrained energy is bounded from below (a fact that can be shown similarly
to the case of ${\erre}^n$ and subcritical nonlinearity, see \cite{C} and \cite{ACFN2}), but it approaches the
infimum without attaining a minimum value.
 The more, the only
nonlinear stationary state is a saddle point of the constrained energy
functional. The existence of a saddle point of the energy is a
remarkable feature, inducing on the phase space of the system stable
and unstable manifolds which, in the absence of other critical points,
are the main structural properties of the dynamics. The consequences
of this fact will be investigated in a subsequent paper. \par\noindent
\vskip5pt 
\section{Preliminaries}

Before giving the main results, we start by fixing the
framework, the notation, and recalling some basic results.

\medskip

\n
1. A three-edge star graph ${\mathcal G}$ can be thought of as composed
by  
three halflines with a
common origin, called {\em vertex}. A {\em state} or {\em
  wavefunction} on the graph is an element of the Hilbert space
$L^2 ({\mathcal G}) \ = \ \bigoplus_{i=1}^3 L^2 (\erre^+; dx_i)$, and
can be represented as a
column vector,
namely
$$
\Psi \ = \ \left( \begin{array}{c} \Psi_1 \\
\Psi_2 \\ \Psi_3
\end{array} \right), \qquad \Psi_i \in L^2 (\erre^+).
$$
The space $L^2 ({\mathcal G})$ is naturally endowed with the hermitian
product
$$ ( \Phi, \Psi)_{L^2 ({\mathcal G})} \ = \ \sum_{i=1}^3 \int_0^{+
  \infty} \overline{\Phi_i (x_i)}\Psi_i (x_i) \, dx_i.
$$
Sobolev  
and $L^p$-spaces 
on ${\mathcal G}$ are defined analogously, namely
$$ H^s ({\mathcal G}) \ = \ \ \bigoplus_{i=1}^3 H^s (\erre^+),
\qquad  L^p ({\mathcal G}) \ = \ \ \bigoplus_{i=1}^3 L^p (\erre^+).
$$
In the following, we denote by $\| \Psi \|_p$ the norm of the function
$\Psi$
in the space $L^p ({\mathcal G})$.
When $p=2$ we shall simply write $\| \Psi \|$.

\medskip

\noindent
2. The dynamics of the system is described by the Schr\"odinger
equation
\be \label{schrod}
i \partial_t \Psi (t) \ = \ - \Delta \Psi (t) - |\Psi (t)|^2\Psi (t),
\ee
where:
\begin{itemize}
\item The operator $-\Delta$ acts on the domain
\be \nonumber 
D (- \Delta ) \  : = \ \{  \Psi \in H^2 ({\mathcal G}), \ \Psi_1 (0) =
\Psi_2 (0)  =  \Psi_3 (0), \ \Psi^{\prime}_1 (0) +\Psi^{\prime}_2 (0) + \Psi^{\prime}_3 (0)  = 
0 \}
\ee
and its action reads 
$$ - (\Delta \Psi)_i \ = \ - \Psi_i^{\prime \prime}, \qquad i =
1,2,3. $$ The condition at the vertex is usually referred to as the
{\em Kirchhoff's boundary condition}. The operator $- \Delta$ is
selfadjoint on $L^2 ({\mathcal G})$.  Notice that, on the graph, 
the laplacian with
Kirchhoff boundary conditions is the natural generalization 
of the free laplacian on the line, as it is easily seen by considering
the line as a two-edge star graph, and noticing that the
boundary conditions reduces to continuity of the wavefunction and
continuity of the derivative at the vertex, i.e. $\Psi\in H^2(\erre)$.
 
\item The nonlinear term in \eqref{schrod} is defined componentwise, namely
$$ (|\Psi|^2 \Psi)_i : =  |\Psi_i|^2 \Psi_i.$$
\end{itemize}

\medskip

\n
3. The problem \eqref{schrod} is globally well-posed in $H^1
({\mathcal G})$ (see \cite{ACFN1}). The $L^2$-norm and the energy
\be \begin{split} \label{energy}
E (\Psi) & \ = \  \f 1 2 \| \Psi^\prime \|^2 - \f 1 4 \| \Psi \|_4^4 
 \ = \ \sum_{i=1}^3 \left( \f 1 2 \| \Psi_i^\prime \|_{L^2(\erre^+)}^2 - 
\f 1 4 \| \Psi_i \|_{L^4(\erre^+)}^4 \right)
\end{split} \ee
are conserved by time evolution. 

In the sequel we use the following notation:
\be \label{energy1}
 E_1 (\psi)  \ = \  \f 1 2 \| \psi^\prime \|_{L^2(\erre^+)}^2 - \f 1 4
\| \psi \|_{L^4(\erre^+)}^4, \qquad \psi \in H^1 (\erre^+)\ee 
and
\be \label{energy2} E_2 (\psi)  \ = \  \f 1 2 \| \psi^\prime \|_{L^2(\erre)}^2 - \f 1 4
\| \psi \|_{L^4(\erre)}^4,  \qquad \psi \in H^1 (\erre). 
\ee

\medskip

\n
4.
Let us recall two well-known results on minimization for the cubic
NLS on one- and two-edge star graphs (i.e., on the halfline and on the line):
\begin{enumerate}
\item The minimum of the functional $E_1$ on the functions in $H^1
  (\erre^+)$ with squared $L^2$-norm equal to $m > 0$ is achieved on
  the function (up to a phase factor)
$$ \phi_m (x) \ = \ \f m {\sqrt 2} \cosh^{-1} \left( \f m 2 x \right),
\qquad x \geq 0,$$  
and gives
\be \label{min1}
E_1 (\phi_m ) \ = \ - \f {m^3} {24}. 
\ee
\item 
The minimum of the functional $E_2$ on the functions in $H^1
  (\erre)$ with squared $L^2$-norm equal to $m > 0$ is achieved on
  the functions (up to a phase factor)
\be \label{solit} \varphi_m^y (x) \ = \ \f m {2 \sqrt 2} \cosh^{-1} \left( \f m 4 (x-y)
\right), \qquad x \in \erre 
\ee  
and gives
\be \label{min2}
E_2 (\varphi_m^y ) \ = \ - \f {m^3} {96}. 
\ee
\end{enumerate}

\section{Results}

As recalled in the introduction the energy functional \eqref{energy} at fixed $L^2$ norm is bounded from below for subcritical nonlinearity, which is our case.\par\noindent
Let us introduce the following family of states:
\begin{definition}
We call {\em sesquisoliton} (i.e. ``one and half" soliton) any function of the form
\be \label{sesqui}
\Phi_{m_1, m_2}^x (x_1,x_2,x_3) : = \left(
\begin{array} {c}  \f{m_1}{\sqrt 2} \cosh^{-1} \left(\f{m_1} 2 x_1\right) \\
\f{m_2}{2 \sqrt 2} \cosh^{-1} \left(\f{m_2}{4 } (x_2-x)\right) \\ 
\f{m_2}{2 \sqrt 2} \cosh^{-1} \left(\f{m_2}{4 } (x_3+x)\right)
\end{array} \right)
\ee
where $0 < m_1 \leq m_2$, $x \geq 0$, and the following condition of
``continuity at the vertex'' holds:
\be \label{kirch}
m_1 \ = \ \f {m_2} 2 \cosh^{-1} \left( \f{m_2} 4 x\right).
\ee
\end{definition}

\begin{figure}[h!]
\begin{center}
\includegraphics[width=0.75\textwidth]{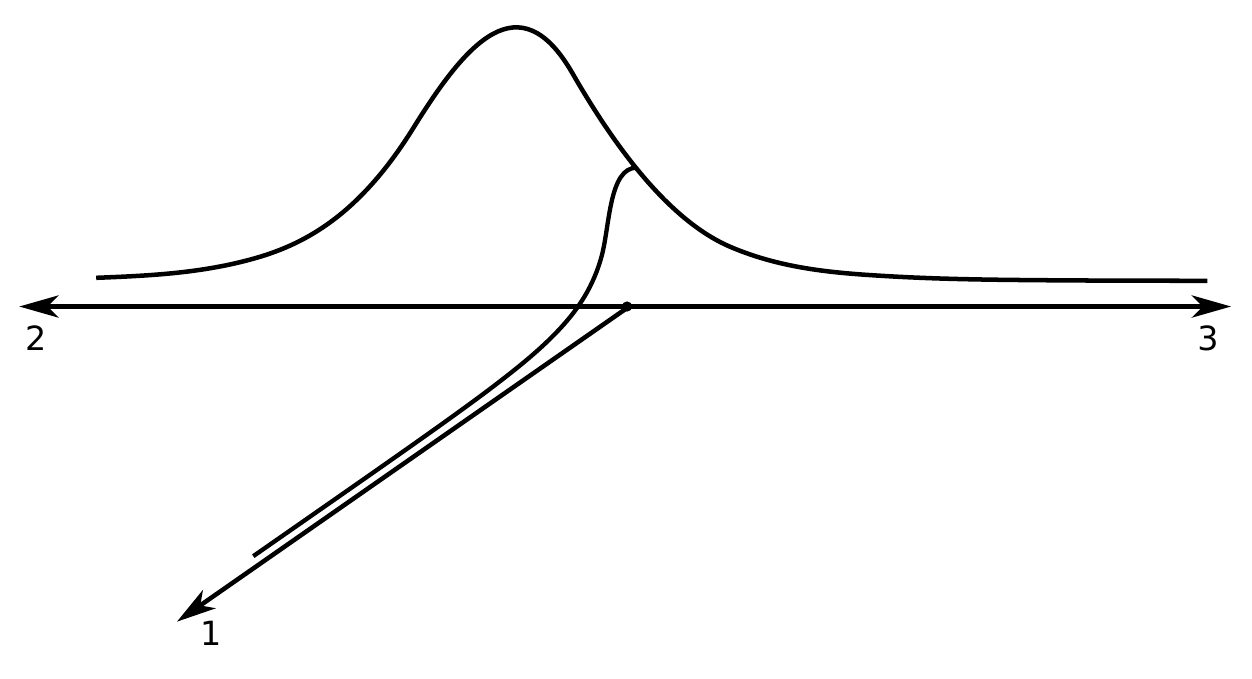}
\caption{\label{fig1} A sesquisoliton on the three-edge star graph.} 
\end{center}
\end{figure}

\n
Notice that the sesquisolitons are obviously elements of $H^1
({\mathcal G})$. As a matter of fact they also belong to $D (- \Delta
)$.   
\par\noindent
Moreover, in the case $x=0$ one has $m_1=\frac{m_2}{2}$, and one obtains a symmetric configuration with three half solitons concurring at the vertex. 
In \cite{ACFN2} it is shown that this is a standing wave for NLS equation \eqref{schrod}.
\par\noindent
Now, we define the manifold of the sequisolitons with fixed
$L^2$-norm as follows:
\be \nonumber 
{\mathcal S}_M : = \{ \Phi_{m_1, m_2}^x, \ \| \Phi_{m_1, m_2}^x \|^2 =
M \}. 
\ee
\begin{theorem} \label{result}
For any $\Psi$ such that $\| \Psi \|^2 = M$, the following 
chain holds:
\be \label{inf}
E (\Psi) > \inf_{\| \Psi \|^2 = M} E (\Psi) = \inf_{\Psi \in {\mathcal
    S}_M} E
(\Psi) = - \f {M^3} {96}.
\ee
\end{theorem}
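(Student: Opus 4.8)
The plan is to sandwich $E$ between the explicit value $-M^3/96$ from below — by reducing the graph problem to the line — and from above, by evaluating the energy along the sesquisoliton family, and then to rule out attainment. For the lower bound I would first replace $\Psi$ by $|\Psi|$, which preserves the mass and the continuity at the vertex and satisfies $E(|\Psi|)\le E(\Psi)$, and then pass to the symmetric decreasing rearrangement $\Psi^\ast$ of $|\Psi|$ on the whole line $\erre$. Equimeasurability gives $\|\Psi^\ast\|^2=M$ and $\|\Psi^\ast\|_4=\|\Psi\|_4$. The crucial input is a P\'olya--Szeg\H{o} inequality adapted to the star graph: because $\Psi$ is continuous at the vertex, every superlevel set $\{|\Psi|>t\}$ has at least two boundary points — exactly three when it contains the vertex (one outgoing point per edge) and at least two otherwise, since each bounded component inside an edge contributes two. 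Writing $N(t)$ for this number and feeding $N(t)\ge 2$ into the co-area formula together with Cauchy--Schwarz yields $\|(\Psi^\ast)'\|_{L^2(\erre)}\le\|\Psi'\|_{L^2({\mathcal G})}$. Hence $E(\Psi)\ge E(|\Psi|)\ge E_2(\Psi^\ast)\ge-M^3/96$ by the line minimization \eqref{min2}.

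Next I would evaluate $E$ along ${\mathcal S}_M$. Since the energy \eqref{energy} is an integral over the three disjoint edges and, after an even reflection, edges $2$ and $3$ reassemble exactly into the line soliton $\varphi_{m_2}^x$, the cross-vertex interaction drops out and $E(\Phi_{m_1,m_2}^x)=E_1(\phi_{m_1})+E_2(\varphi_{m_2}^x)=-m_1^3/24-m_2^3/96$ by \eqref{min1}--\eqref{min2}; in particular this value is independent of the center $x$. Imposing $m_1+m_2=M$ together with the continuity constraint \eqref{kirch} leaves a one-parameter family with $m_1\in(0,M/3]$, and letting $x\to+\infty$ (so that $m_1\to0$ and $m_2\to M$) gives $E\to-M^3/96$. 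Thus $\inf_{{\mathcal S}_M}E\le-M^3/96$; combined with the previous step and with ${\mathcal S}_M\subset\{\|\Psi\|^2=M\}$, this forces the two infima to coincide and to equal $-M^3/96$.

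Finally I would prove that $-M^3/96$ is never attained, which gives the strict inequality in \eqref{inf}. Assuming $E(\Psi)=-M^3/96$ with $\|\Psi\|^2=M$ and $\Psi\ge0$, equality must propagate: $\Psi^\ast$ is then a symmetric decreasing minimizer of $E_2$, hence $\Psi^\ast=\varphi_M^0>0$ everywhere, and equality holds in the rearrangement inequality. The latter forces $N(t)=2$ for a.e.\ $t$, which excludes $|\Psi(0)|>0$ (otherwise $N(t)\ge3$ for $t<|\Psi(0)|$) and concentrates all the mass on a single edge vanishing at the vertex; the Cauchy--Schwarz equality then makes that profile symmetric about its peak, hence compactly supported, contradicting $\Psi^\ast=\varphi_M^0>0$. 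As a cleaner alternative, a minimizer would solve the stationary equation $-\Psi''-|\Psi|^2\Psi=\omega\Psi$ with the full Kirchhoff conditions, including $\sum_i\Psi_i'(0)=0$; a short classification of positive decaying solutions (equal amplitudes force equal $|c_i|$, and three nonzero balanced slopes cannot sum to zero) shows that the only such state is the symmetric sesquisoliton $x=0$, of energy $-M^3/216\neq-M^3/96$, again a contradiction. I expect the P\'olya--Szeg\H{o} step — obtaining the reduction to the \emph{line} rather than the half-line with the correct boundary-point count, and above all analyzing its equality cases for the non-attainment — to be the main technical obstacle.
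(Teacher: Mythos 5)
Your proposal is correct, but its two hard steps follow a genuinely different route from the paper's. The paper never rearranges over the whole graph: after relabelling so that $\| \Psi_1 \| \le \min (\| \Psi_2 \|, \| \Psi_3 \|)$, it sets $m_1 = \| \Psi_1 \|^2$, $m_2 = \| \Psi_2 \|^2 + \| \Psi_3 \|^2$, glues edges $2$ and $3$ into a single line function $\psi$ as in \eqref{raddrizza} (this uses vertex continuity, exactly as your $N(t)\ge 2$ count does), and applies the two known one-dimensional results \eqref{min1}--\eqref{min2} componentwise to obtain the pointwise domination $E(\Psi) \ge E_1(\phi_{m_1}) + E_2(\varphi_{m_2}^{-x}) = E(\Phi_{m_1,m_2}^{x})$: every state is dominated in energy by the sesquisoliton with matched masses. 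This single inequality yields simultaneously the identity of the two infima and---since $E(\Phi_{m_1,m_2}^{x}) = -m_1^3/24 - m_2^3/96 > -M^3/96$ for every admissible $m_1 \in (0, M/3]$---the strict inequality, with no equality-case analysis at all (the degenerate case $\Psi_1 \equiv 0$, where no associated sesquisoliton exists, is glossed over in the paper, but is immediate since then $\psi(0)=0$ and $\psi$ cannot be a line soliton). You instead prove the lower bound by a graph P\'olya--Szeg\H{o} inequality (rearrangement onto the line via $N(t)\ge 2$) and non-attainment either by its equality cases or by an Euler--Lagrange classification; only your upper-bound computation along ${\mathcal S}_M$ coincides with the paper's. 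Your route is heavier---the equality analysis you flag is indeed the delicate point, and the Euler--Lagrange alternative needs the reduction to nonnegative minimizers and the sign of the multiplier spelled out---but it is also more robust: it is essentially the rearrangement technique that extends to general star graphs, whereas the paper's matched-mass comparison exploits the specific decomposition into a half-line plus a line. Two small corrections to your write-up: ``exactly three boundary points'' when the level set contains the vertex should be ``at least three'' (further components contribute additional points), though only the lower bound $N(t)\ge 2$, respectively $N(t) \ge 3$, is ever used; and your value $-M^3/216$ for the symmetric state, together with the slope-balancing argument that singles it out, is correct and consistent with the paper's Corollary.
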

\begin{proof}
Given $\Psi \in L^2 ({\mathcal G})$, it is possible to construct a
sesquisoliton with the same $L^2$-norm but with lower energy. We
proceed as follows. Let us suppose that 
\be \label{masses} \| \Psi_1 \| \leq \min (\| \Psi_2 \|, \| \Psi_3
\|). \ee
Then, consider the sesquisoliton \eqref{sesqui} with $m_1 =  \|
\Psi_1 \|^2, m_2 = \| \Psi_2 \|^2 + \| \Psi_3 \|^2$, and $x \geq 0$
chosen in order to satisfy the condition \eqref{kirch}. Notice that
such a choice is always possible since $2 m_1 \leq m_2$, and is unique. 

If the condition \eqref{masses} is not fulfilled, then one first
relabels the edges in order to have the minimal mass on the first one,
and thus proceeds as before.

It is immediately
seen that
$ \| \Psi_1 \|^2 = \| (\Phi_{m_1, m_2}^x )_1 \|^2 = m_1$ and 
$ \| \Psi_2 \|^2 +  \| \Psi_3 \|^2 = \| (\Phi_{m_1, m_2}^x )_2 \|^2 +
\| (\Phi_{m_1, m_2}^x )_3 \|^2 = m_2$,
thus $\| \Phi_{m_1, m_2}^x \|_2^2 = M$. 

\n
Let us define the following function on the real line:
\be \label{raddrizza} \begin{split} 
\psi (\xi) \ : = & \ \left\{ \begin{array}{c}
\Psi_2 (- \xi), \qquad \xi < 0 \\
\Psi_3 (\xi), \qquad \xi > 0
\end{array} \right.
\end{split}
\ee
and notice that, by \eqref{solit} and \eqref{sesqui}, 
\be \label{raddrizza2} \begin{split} 
\varphi_{m_2}^{-x} (\xi) \ : = & \ \left\{ \begin{array}{c}
(\Phi_{m_1,m_2}^x)_2 (- \xi), \qquad \xi < 0 \\
(\Phi_{m_1,m_2}^x)_3 (\xi), \qquad \xi > 0
\end{array} \right.
\end{split}
\ee

Furthermore, by \eqref{min1}, \eqref{min2}, \eqref{energy},
\eqref{energy1}, \eqref{energy2}, \eqref{raddrizza}, and \eqref{raddrizza2} one immediately
has the following chain of inequalities
\be \nonumber \begin{split}
E (\Psi) & \ = \ E_1 (\Psi_1) + E_2 (\psi) \ \geq \  E_1 ( \phi_{m_1} )
+  E_2 (\varphi_{m_2}^{-x})
 \ = \ E (\Phi_{m_1, m_2}^x )
\end{split} 
\ee
so the first of the two identities in \eqref{inf} is proven. To prove
the second, we
use \eqref{min1} and \eqref{min2} and obtain
\be \nonumber 
E (\Phi_{m_1, m_2}^x) \ = \ - \f {m_1^3}{24} - \f {m_2^3}{96}, 
\ee
so, noting that $M = m_1 + m_2$, we obtain
\be \label{e-sesqui-2}
E (\Phi_{m_1, m_2}^x) \ = \  - \f {m_1^3}{24} + \f {(m_1-M)^3}{96},
\ee
where $m_1$ plays the role of a parameter. We stress that, due to the
constraint of the mass of the soliton, $m_1$ can vary in the interval $(0,
M/3]$.
Differentiating \eqref{e-sesqui-2}, one immediately has that $E
(\Phi_{m_1, m_2}^x)$ is monotonically increasing in such interval, so
$$
\inf_{\Psi \in {\mathcal S}_M} E (\Psi) \ = \ \lim_{m_1 \to 0+} E (\Phi_{m_1,
  m_2}^x) \ = \ -  \f {M^3}{96}.
$$
To complete the proof, we must show that, for any $\Psi$ in
$H^1({\mathcal G})$, $E(\Psi)$ is strictly larger than $- M^3/96$.
To this aim, we notice that such an infimum cannot be
achieved, as for $m_1 =0$ the condition \eqref{kirch} does not
correspond to an admissible sesquisoliton, so it cannot be fulfilled.
\end{proof} 

\begin{cor}
The sesquisoliton $\Phi_{M/3, 2M/3}^0$ is a saddle point for the energy
functional.
\end{cor}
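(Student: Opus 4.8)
The plan is to exhibit, through the critical point $\Phi_{M/3,2M/3}^0$, one curve on the constraint manifold $\{\|\Psi\|^2=M\}$ along which the energy strictly decreases and another along which it strictly increases. Since $\Phi_{M/3,2M/3}^0$ is a standing wave, and hence a constrained critical point (as recalled after \eqref{kirch}, following \cite{ACFN2}), the existence of both such directions shows that it is neither a local minimum nor a local maximum, i.e. a saddle point. As a preliminary remark I would record its energy value: setting $m_1=M/3$, $m_2=2M/3$ in \eqref{e-sesqui-2} gives $E(\Phi_{M/3,2M/3}^0)=-M^3/216$, which is strictly above the infimum $-M^3/96$ of Theorem \ref{result}, consistently with the fact that that infimum is not attained.

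For the descending direction I would simply move along the sesquisoliton manifold $\mathcal{S}_M$. In the proof of Theorem \ref{result} it was shown that $E$ restricted to $\mathcal{S}_M$, written through \eqref{e-sesqui-2} as a function of the parameter $m_1\in(0,M/3]$, is strictly increasing. The symmetric state sits at the right endpoint $m_1=M/3$, so decreasing $m_1$ produces a curve through $\Phi_{M/3,2M/3}^0$, lying entirely on the constraint sphere, along which $E$ strictly decreases; hence $\Phi_{M/3,2M/3}^0$ is not a local minimum (in fact it is the maximum of $E$ over $\mathcal{S}_M$).

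For the ascending direction I would restrict the functional to the subspace of symmetric states $\Psi=(\psi,\psi,\psi)$ with $\psi\in H^1(\erre^+)$. On this subspace the mass constraint reads $\|\psi\|^2=M/3$ and the energy equals $3\,E_1(\psi)$. By the half-line minimization result recalled in item (1) of Section~2, $E_1$ at fixed mass $M/3$ is minimized, uniquely up to a phase, by $\phi_{M/3}$, which is exactly the common component of $\Phi_{M/3,2M/3}^0$. Therefore, among symmetric states of mass $M$, the symmetric sesquisoliton is the constrained minimizer, and every nontrivial mass-preserving symmetric perturbation strictly raises the energy; hence $\Phi_{M/3,2M/3}^0$ is not a local maximum.

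Combining the last two paragraphs yields the claim. The delicate point is conceptual rather than computational: the two curves must genuinely lie on the constraint manifold and pass through the critical point, which holds by construction for the sesquisoliton curve (contained in $\mathcal{S}_M$) and because the symmetric curve lies in the mass sphere of the symmetric subspace on which $\phi_{M/3}$ is the constrained minimizer. I expect the only real obstacle to be the observation that the descending direction is admissible despite $m_1=M/3$ being an endpoint of the parameter range: there the deformation is one-sided, but one-sidedness is harmless, since to establish a saddle point it suffices to produce a single direction of strict decrease together with a single direction of strict increase.
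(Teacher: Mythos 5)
Your proposal is correct and follows essentially the same route as the paper's own proof: the critical-point property comes from the standing-wave characterization (equivalently, the Euler--Lagrange equation with Lagrange multiplier $\omega = M^2/36$), the non-minimality comes from the monotonicity of $E$ along the sesquisoliton curve in ${\mathcal S}_M$ established in the proof of Theorem \ref{result}, and the non-maximality comes from the fact that the symmetric state minimizes $E$ on the submanifold $\Psi_1=\Psi_2=\Psi_3$ via the half-line result \eqref{min1}. Your added observations (the explicit value $E(\Phi_{M/3,2M/3}^0)=-M^3/216$ and the remark that the one-sided deformation at the endpoint $m_1=M/3$ suffices) are correct refinements but do not change the argument.
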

\begin{proof}
First, we notice that $\Phi_{M/3, 2M/3}^0$ is a critical point. Indeed,
it satisfies
the Euler-Lagrange equation for the energy functional
constrained on the manifold $\| \Psi \|^2 = M$, namely
$$
- \Delta \Psi - | \Psi |^2 \Psi + \omega \Psi \ = \ 0, 
$$
where $\omega$ is a Lagrange multiplier, coinciding with $\f
{M^2}{36}$.

In order to prove that $\Phi_{M/3, 2M/3}^0$ is a saddle point, it is sufficient to show
that it maximizes the energy restricted to a curve to which it belongs in the constraint manifold, and minimizes the energy when restricted to a different curve.

By the proof of theorem
\eqref{result} we have a curve on which $\Phi_{M/3, 2M/3}^0$ is a maximum of the energy, that is, the
curve made of sesquisolitons parametrized by $m_1$. On the other hand,
by \eqref{min1} we know that $\Phi_{M/3, 2M/3}^0$ minimizes the energy
at fixed mass on any edge. Then, it is a minimum of the energy restricted to the submanifold of
function with $\Psi_1 = \Psi_2 = \Psi_3$.
\end{proof}
\vskip 10pt

\begin{figure}[h!]
\begin{center}
\includegraphics[width=0.75\textwidth]{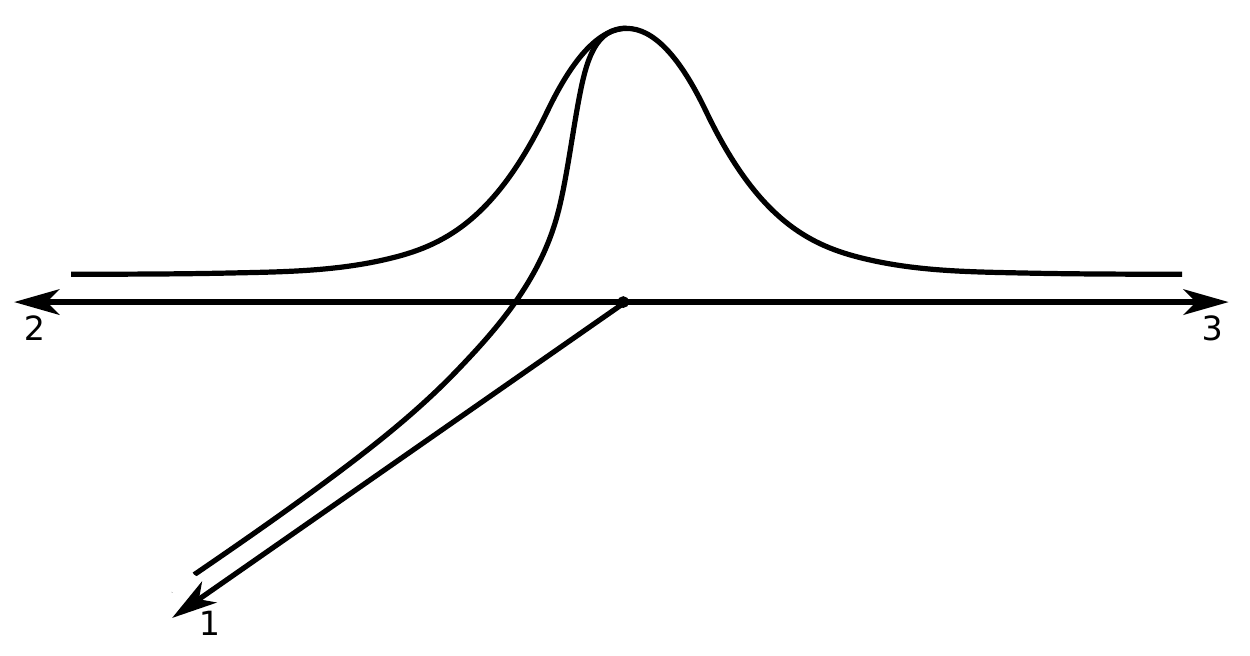}
\caption{\label{fig2}The unique stationary state of the cubic NLS on
  the three-edge star graph.} 
\end{center}
\end{figure}

\n
The previous result can be extended to all star graphs with a similar
construction, and a more systematic analysis of the character of  
stationary states on star graphs will be given in a future work.

\bigskip

\n
{\bf Acknowledgements.} R. A. is partially supported by the PRIN2009 grant 
 ``Critical Point Theory and Perturbative Methods for Nonlinear Differential Equations''.

\vskip20pt


\begin{thebibliography}{99}
\bibitem{ACFN1}Adami R., Cacciapuoti C., Finco D., Noja D.: Fast solitons on star graphs, Rev Math. Phys. {\bf 23}, 4, 409-451 (2011)
\bibitem{ACFN2}Adami, R. Cacciapuoti C., Finco D., Noja D.: Stationary states of NLS on star graphs arXiv:1104.3839v2 (2011)
\bibitem{BL1} Berestycki H., Lions P.L.: Nonlinear scalar field
  equations I and II, Arch. Rat. Mech. Anal. {\bf 82},   313-375 (1983)
\bibitem{C} Cazenave T.: Semilinear Schr\"odinger equations, AMS (2003)
\bibitem{Ex} Exner P., Keating J.P., Kuchment P., Sunada T., and
  Teplyaev A.: 
Analysis on graphs and its applications, AMS (2008).
\bibitem{Smi}Gnutzmann, S. Smilansky, U., Derevyanko S.:
 Stationary scattering from a nonlinear network, Phys. Rev. A {\bf 83}, 033831 (2011)
\bibitem{Kevr} Kevrekidis P.G., Frantzeskakis D.J., Theocharis G.,
  Kevrekidis I.G.: Guidance of matter waves through Y-junctions,
  Phys. Lett. A {\bf 317}, 51322 (2003)
\bibitem{kostrykin} Kostrykin V., Schrader R.: Kirchhoff's rule for
  quantum wires,  J.Phys. A: Math. Gen. {\bf 32} no. 4, 595-630 (1999)
\bibitem{Kuchment} Kuchment P.: Graph models for waves in thin
  structures, Waves Random Media, {\bf{12}} (4): R1-R24 (2002)
\bibitem{Sob}Sobirov Z., Matrasulov D., Sabirov K., Sawada S.,
  Nakamura K.: 
Integrable nonlinear Schr\"odinger equation on simple networks: 
 Connection formula at vertices, Phys. Rev. E {\bf 81}, 066602 (2010)
\bibitem{TOD}Tokuno A., Oshikawa M., Demler E.: Dynamics of one
  dimesional Bose  liquids: Andreev-like reflection at Y junctions and
  the absence of the  Aharonov-Bohm effect, Phys. Rev. Lett. {\bf 100}, 140402 (2008)
\bibitem{Miro}Miroshnichenko A. E., Molina M.I., Kivshar Y.S.: 
Localized modes and bistable scattering in nonlinear network
junctions,  Phys. Rev. E {\bf 75}, 046602 (2007)
\bibitem{Wan} Wan, W. Muenzel S. Fleischer, W.: Wave Tunneling and
  Hysteresis in Nonlinear  Junctions, Phys. Rev. Lett. {\bf 104}, 073903 (2010)
\end{thebibliography}
\end{document}